\DeclareMathAlphabet{\mathpzc}{OT1}{pzc}{m}{it}
\DeclareSymbolFontAlphabet{\amsmathbb}{AMSb}%
\declaretheorem{theorem}
\declaretheorem{proposition}
\declaretheorem[sibling=theorem]{lemma}
\declaretheorem{claim}
\newcommand{\beq}{\begin{equation}}
\newcommand{\eeq}{\end{equation}}
\newcommand{\beqa}{\begin{eqnarray}}
\newcommand{\eeqa}{\end{eqnarray}}
\newcommand{\beqan}{\begin{eqnarray*}}
\newcommand{\eeqan}{\end{eqnarray*}}
\newcommand{\E}{\mathds{E} }
\newcommand{\prob}{\mathbb{P}}
\newcommand\raiseT[2]{\raisebox{0.25ex}{$#1#2$}
}
\newcommand{\Rset}{\mathbb{R}}
\newcommand{\Uset}{\mathbb{U}}
\newcommand{\Xset}{\mathbb{X}}
\newcommand{\Zset}{\mathbb{Z}}
\newcommand{\Bcal}{{\cal B}}
\newcommand{\Ccal}{{\cal C}}
\newcommand{\Mcal}{{\cal M}}
\newcommand{\Pcal}{{\cal P}}
\newcommand{\Rcal}{{\cal R}}
\newcommand{\Scal}{{\cal S}}
\newcommand{\Zcal}{{\cal Z}}
\renewcommand{\v}[1]{{\bm{#1}}}
\newcounter{l1}
\newcounter{l2}
\newcounter{l3}
\newcommand{\bdotlist}{\begin{list}{$\bullet$}{}}
\newcommand{\bboxlist}{\begin{list}{$\Box$}{}}
\newcommand{\bbboxlist}{\begin{list}{\raisebox{.005in}{{\tiny
$\blacksquare$ \ \ }}}{}}
\newcommand{\bdashlist}{\begin{list}{$-$}{} }
\newcommand{\blist}{\begin{list}{}{} }
\newcommand{\barablist}{\begin{list}{\arabic{l1}}{\usecounter{l1}}}
\newcommand{\balphlist}{\begin{list}{(\alph{l2})}{\usecounter{l2}}}
\newcommand{\bAlphlist}{\begin{list}{\Alph{l2}.}{\usecounter{l2}}}
\newcommand{\bdiamlist}{\begin{list}{$\diamond$}{}}
\newcommand{\bromalist}{\begin{list}{(\roman{l3})}{\usecounter{l3}}}
\newcommand{\MC}{\text{MC}}
\newcommand{\FA}{\text{FA}}
\newcommand{\D}{\text{D}}
\newcommand{\CD}{\text{CD}}
\newcommand{\TV}{\text{TV}}
\newcommand{\MO}{\text{MO}}
\newcommand{\pois}{\text{Poisson}}
\renewcommand{\hat}{\widehat}
\newcommand{\bose}[1]{{\textcolor{red}{Bose says: {#1}}}}
\renewcommand{\tilde}{\widetilde}
\renewcommand{\top}{{\mathpalette\raiseT\intercal}}
\title{
Controlling a Markov Decision Process with an \\ Abrupt Change in the Transition Kernel}
\author{Nathan Dahlin \qquad Subhonmesh Bose \qquad Venugopal V. Veeravalli
\thanks{All authors are with the Department of Electrical and Computer Engineering at the University of Illinois  Urbana-Champaign, Urbana, IL 61801. Emails: \{dahlin,boses,vvv\}@illinois.edu. This work was partly supported by a grant from C3.ai Digital Transformation Institute. }}
\begin{document}

\maketitle

\begin{abstract}
We consider the control of a Markov decision process (MDP) that undergoes an abrupt change in its transition kernel (mode). We formulate the problem of minimizing regret under control switching based on mode change detection, compared to a mode-observing controller, as an optimal stopping problem. Using a sequence of approximations, we reduce it to a quickest change detection (QCD) problem with Markovian data, for which we characterize a state-dependent threshold-type optimal change detection policy. Numerical experiments illustrate various properties of our control-switching policy. 
\end{abstract}

\section{Introduction}
\label{sec:introduction}

    
    
    
    
    
  

Control engineering has perfected the art of controller synthesis to optimize system costs, given a static environment, often modeled as an MDP. Practical engineering contexts are often non-stationary, i.e., the underlying dynamics of the environment encoded in the transition kernel of the MDP, changes at some point in time. A controller that is agnostic to that change can perform poorly against the changed environment. If the change point is known a priori, or is revealed when it happens, optimizing the control design becomes simple -- design controllers for the two ``modes'' of the system and then switch between the controllers, when the mode change happens. We tackle the question of mode change detection, followed by switching between mode-specific controllers.  Specifically, we consider the case where the underlying transition kernel of an MDP changes at a random point, and the change must be detected by observing the state dynamics. 

The mode of the system for our problem functions essentially as a hidden state. As a result, controlling a non-stationary MDP can be formulated as a partially-observed MDP (POMDP). Even when the state space is small and finite, solving a POMDP can be computationally challenging. See \cite{monahan1982state} for details. 
The change-detection based control switching paradigm provides a computationally cheaper alternative. 

Detection of a change in the statistics of a stochastic process has a long history in statistics and optimal control literature, with roots in \cite{shewhart1925application}.
Quickest change detection (QCD) theory seeks a causal control policy that an observer of the stochastic process can use to detect a change.
In designing such a detector, one must balance between two competing erroneous declarations. A hasty detector might declare the change too early, leading to a false alarm. With controller switching against non-stationary MDPs, an early change will incur extra cost of using a possibly sub-optimal controller over the period when the controller has changed, but the transition kernel has not. A lethargic detector who declares the change too late, pays the delay penalty of continuing to use the wrong controller, even after the transition kernel has changed. Thus, an optimal change detection mechanism balances between the possibilities of false alarm and delay, optimizing the costs incurred in these situations. In this paper, we formalize the controllers under the changing environment in Section \ref{sec:formulation} and define the regret of change detection-based controller switching in Section \ref{sec:formalizing_regret}.

When the pre- and post-change data is independent and identically distributed (i.i.d.), the optimal Bayesian change detector with geometric prior distribution assumed over the change point, was derived in \cite{shiryaev2007optimal}. Alternate formulations without such prior distributions have been derived in \cite{lorden1971procedures,ritov1990decision,pollak1985optimal}. See \cite{polunchenko2012state,veeravalli2014quickest} for detailed surveys on this history. In this paper, we assume a geometric prior on the change point. In Section \ref{sec:fast_mixing}, we simplify the regret expression for controller switching via approximations that rely on fast mixing of the Markov chains under the mode-specific controllers. Then in Section \ref{sec:opt_qcd_approx}, we formulate the question of optimal controller switching for minimizing the approximate regret as a QCD problem with Markovian data, and characterize an optimal change detector. Our main result (Theorem \ref{thm:optimal_thresh}) proves that the optimal switching policy is threshold-type, where the thresholds depend on the observed Markov state of the system. While our goal in this paper is (approximate) regret minimization with switching between mode-specific controllers for a single change in the transition kernel of an MDP, Theorem \ref{thm:optimal_thresh} applies more broadly to general QCD problems with Markovian observations. We remark that while our results are presented with change in transition kernels, the theory can be extended to consider changes in cost models across the change point.

In Section \ref{sec:numerical}, we present numerical experiments to  demonstrate the performance of the controller switching policy that we design on example non-stationary MDPs. Our results demonstrate that the change-detection based controller switching performs very similarly to that using mode-observation under a variety of parameter choices.


Perhaps the closest in spirit to our work is  \cite{banerjee2017quickest}, where the authors provide several interesting insights into the use of QCD to controller switching for non-stationary MDPs. Compared to \cite{banerjee2017quickest}, however, our primary goal is theoretical analysis to establish the  optimality of the  state-dependent threshold-type Bayesian change detector for approximate regret minimization, and to provide conditions under which such approximations are expected to be accurate. We conclude the paper in Section \ref{sec:conc}. Proofs of results are included in the appendix.

\section{Formulating the Quickest Transition Kernel Change Detection Problem}
\label{sec:formulation}
Consider a controlled, non-stationary Markovian dynamical system evolving over a finite state space $\Xset$ in discrete time ($t\in\Zset_+$) under the action of a controller choosing actions from a finite action space $\Uset$. The system is non-stationary in the sense that the state transition kernel remains fixed up until a random time $\Gamma$, when the system changes its mode. When the mode change happens, the state evolution, i.e., the transition kernel, changes. Denote the pre-change and post-change modes by $\theta=1$ and $\theta=2$, respectively. The corresponding transition kernels are given by $P_1(\cdot | x, u)$ and $P_2(\cdot | x, u)$, respectively, for $x\in\Xset$, $u \in\Uset$. The, non-stationary state transition kernel is then given by
\begin{equation}
    P_t(x_{t+1}|x_t,u_t) = \begin{cases}P_1(x_{t+1}|x_t,u_t),&\text{if }t<\Gamma,\\
    P_2(x_{t+1}|x_t,u_t),&\text{otherwise}.
    \end{cases}
\end{equation}
In general, the setting described thus far could additionally include mode dependent stage-cost functions. For simplicity, assume that stage-costs are not mode dependent, and, given by the time invariant function $c\,:\,\Xset\times\Uset\to\Rset$. Using the notation $Y_0^t$ to denote the sequence $(Y_0, \ldots, Y_t)$ for an arbitrary variable $Y$, we define $h_t := (x_0^t, u_0^{t-1})$ as the state-action history available prior to taking an action at time $t$. 

The objective is to choose a policy $\pi$ to minimize the long term cost, i.e., for all $x\in \Xset$,
\begin{equation}\label{eq:inf_horizon_cost}
    \begin{split}
        J^{\star}(x)
        &= \min_{\pi\in \Pi}\,J^{\infty}(x;\pi):= \min_{\pi\in\Pi}\,\E_{\substack{X_t \sim P_t(\cdot|x_{t-1}, u_{t-1})\\ U_t \sim \pi(\cdot | h_{t})}}\left[\sum_{t=0}^{\infty}\gamma^tc(X_t,U_t)|X_0=x\right].
    \end{split}
\end{equation}

As the change point $\Gamma$ is not observable, a Markovian policy may not, generally speaking, be optimal in \eqref{eq:inf_horizon_cost}. Considering $\theta_t\in\{1,2\}$, together with $x_t$ as part of an augmented state process $s_t:=(x_t,\theta_t)$, \eqref{eq:inf_horizon_cost} can be formulated as a partially observed Markov decision process (POMDP). Direct solutions to POMDPs are computationally intensive. We propose and analyze an alternative approach--employ change detection and switch between two mode-specific controllers, i.e.,
optimal stationary policies for each of the system modes. We denote the mode-specific policies as $\pi_i:\Xset\to\Uset$ for $i=1,2$, noting that in the fixed mode settings we may restrict to deterministic, Markovian policies without loss of optimality \cite{puterman2014markov}. In particular, $\pi_i\in\Pi$ for $i=1,2$, and 
\begin{equation}
{\small
    \begin{split}
        &J^{\infty}(x; \pi_i,P_i)=J^{\star}(x;P_i):= \min_{\pi \in \Pi} J^{\infty}(x; \pi,P_i):= \min_{\pi\in\Pi}\E_{\substack{X_t \sim P_i(\cdot|x_{t-1}, \pi(x_{t-1}))\\U_t\sim\pi(\cdot|h_t)}} \left[ \sum_{t=0}^\infty \gamma^t c(X_t, \pi(X_t))|X_0=x \right].
    \end{split}
}
\end{equation}

Limiting our search to policies that switch between $\pi_1$ and $\pi_2$ at a single time $t=\tau$, the more general policy design problem in \eqref{eq:inf_horizon_cost} reduces to an optimal stopping problem, where stopping corresponds to switching from $\pi_1$ to $\pi_2$. In this context, a natural performance baseline policy is one which observes the mode directly and switches controllers at time $t=\Gamma$. Intuitively, if an algorithm can accurately detect the change point, then a switching policy employing such an algorithm should well approximate the performance achieved when mode observations are available. Our goal is to utilize tools from the QCD literature to develop a change detection algorithm for the described non-stationary MDP setting, and characterize the regret of the resulting change-detection based controller with respect to the mode-observing baseline.

We remark that our regret analysis will not exactly compare a CD-based algorithm with the ``optimal'' control. To see why, note that $\pi_1$ is designed to minimize the long-run discounted cost, assuming that system remains in mode 1 for the entire infinite horizon. In other words, the design of $\pi_1$ is oblivious to the possibility of a mode change, and hence, is not strictly optimal in our non-stationary setting, even when $\theta_t=1$. The choice of operating with $\pi_2$, however, is optimal post-change, as we preclude the possibility of the mode switching back from $\theta=2$ to $\theta=1$. On the other hand, optimal policies arising from the POMDP formulation previously mentioned would take into account the potential mode change, and further, continuously adjust actions in accordance with a belief regarding the system mode. Tackling such complications is relegated to future endeavors.

\section{Defining the Regret}
\label{sec:formalizing_regret}
{Let $\v{X} = \{X_t\}_{t\geq 0}$ and $\v{X}'=\{X'_t\}_{t\geq0}$ denote the sequence of state observations generated by the change detection (CD) based and mode observing controllers, under the same random change point $\Gamma$. Note that the state trajectories of the CD and mode observing controllers will always agree prior to the change point, or the time at which the CD controller switches to $\pi_2$, whichever is earlier.\\
\indent We seek to identify a causal switching rule $\tau$ that defines an extended integer-valued random variable, adapted to the filtration $ \{\sigma(X_1,\dots,X_t)\}$, generated by the stochastic process. We will often use $\tau$ to denote the random stopping time chosen by the underlying decision rule. Note that we need not adapt $\tau$ to the sequence $\v{U} = \{U_t\}_{\{t\geq0\}}$, as the observed states are mapped deterministically to actions by both $\pi_1$ and $\pi_2$. 

Let $\Theta_t\in\{1,2\}$ denote the mode of the system at time $t$. Further, let $D_t$ denote the switching decision at time $t$, i.e., $D_t=1$ for $t<\tau$, and $D_t=2$ otherwise. Then, the regret minimization problem can be written as 
\begin{equation}\label{eq:true_add_cost}
    \min_{\tau\in\Scal}\,\,\E\left[\sum_{t=0}^{\infty}\gamma^tr_t(X_t,D_t,\Theta_t)\right]=:\Rcal(\tau),
\end{equation}
where $\Scal$ is the set of all causal switching policies satisfying $P(\tau<\infty)=1$, and 
\begin{equation}\label{eq:stage_regret}
    \begin{split}
        &r_t(X_t,D_t,\theta_t) = \begin{cases}
            c(X_t,\pi_1(X_t))-c(X'_t,\pi_2(X'_t)),& \text{if }D_t=1,\Theta_t=2,\\
            c(X_t,\pi_2(X_t))-c(X'_t,\pi_1(X'_t)),& \text{if }D_t=2,\Theta_t=1,\\
            c(X_t,\pi_2(X_t))-c(X'_t,\pi_2(X'_t)),& \text{if }D_t=2,\Theta_t=2.\\
        \end{cases}
    \end{split}
\end{equation}
\indent In order to further analyze the expected regret in \eqref{eq:true_add_cost}, we need to introduce additional notation. 
$J^m_{i|j}(Q)$ denotes the expected $m$ step horizon cost, when the initial state is distributed according to $Q$ and policy $\pi_i$ is used under system mode $j$, for $i,j=1,2$. If the initial state $X$ is given, we write $J^m_{i|j}(\delta_X)$, where $\delta_X$ is the measure concentrated at $X$. 
Let $P^{m}_{i|j}(\delta_X)$ be the $m$ step probability distribution, given that the system started in state $X$, and policy $\pi_i$ was used in system mode $j$. 
Additionally, we denote the stationary state distribution arising from the use of policy $\pi_i$ in mode $j$, for $i,j=1,2$ as $\Delta_{i|j}$, and the associated Markov chain as $\Mcal_{i|j}$. 
With this notation in hand, we can give expected regret terms corresponding to the false alarm and delay cases, conditioned on the history of state observations until time $t$.

For switching policies, $D_t=2$ implies that $D_{t+k}=2$ for all $k>0$. The decision to switch ends at the first time when $D_t=1$. Thus, we can fold the remaining infinite horizon cost-to-go into the cost of switching, and the optimal switching policy for \eqref{eq:true_add_cost} can be obtained from dynamic programming. Specifically, consider the Bellman equation 
\begin{equation}\label{eq:exact_Bellman}
\begin{split}
    V_t(X^t_0) &= \min_{d_t} \big\{\left(r^{\FA}_t + r^{\D}_t\right)\mathds{1}_{d_t=2}
    \\
    & \qquad + \big(\E\big[\mathds{1}_{\Gamma<t}(c(X_t,\pi_1(X_t))-c(X'_t,\pi_2(X'_t)))+ \gamma V_{t+1}(X^{t+1}_0)|X^t_0,d_t\big]\big)\mathds{1}_{d_t=1}\big\},
\end{split}
\end{equation}
where $r^{\FA}_t$ and $r^{\D}_t$ give the expected regret (cost) to go due to false alarm and delay when stopping at time $t$, respectively, conditioned on $X^t_0$. The following proposition gives expressions for $r^{\FA}_t$ and $r^{\D}_t$. See Appendix \ref{app:expected_regret} for a detailed derivation. Since $U_t = \pi_1(X_t)$ for $t<\tau$, we do not explicitly include conditioning on $\{U^t_0\}$ in \eqref{eq:exact_Bellman}. We likewise suppress inclusion of $\{U^t_0\}$ in the subsequent development in expressions where the use of $\pi_1$ or $\pi_2$ is clear from context. 
\begin{proposition}
\label{prop:expected_regret_to_go}
The following assertions hold:
\begin{align}
    &r^{\FA}_t = \E\bigg[\mathds{1}_{t<\Gamma}\bigg(\left(J^{\Gamma-t}_{2|1}(\delta_{X_t}) - J^{\Gamma-t}_{1|1}(\delta_{X_t})\right) \notag
    \\
    &\qquad \qquad \qquad
    + \gamma^{\Gamma-t}\left(J^{\infty}_{2|2}\left(P^{\Gamma-t}_{2|1}(\delta_{X_t})\right) - J^{\infty}_{2|2}\left(P^{\Gamma-t}_{1|1}(\delta_{X_t})\right)\right)\bigg)|X^t_0\bigg],
    \label{eq:prop_r_FA}
    \\
    &r^{\D}_t =\E\Bigg[\mathds{1}_{t\geq\Gamma}\Bigg(\left(J^{\infty}_{2|2}(\delta_{X_t}) - J^{\infty}_{2|2}(P^{t-\Gamma}_{2|2}(\delta_{X_{\Gamma}}))\right)\Bigg)|X^t_0\Bigg].\label{eq:prop_r_D}
\end{align}
\end{proposition}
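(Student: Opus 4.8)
The plan is to derive the expressions for $r^{\FA}_t$ and $r^{\D}_t$ by unwinding the definition of the stage regret in \eqref{eq:stage_regret}, separating the discounted infinite sum at the (random) change point $\Gamma$, and repeatedly invoking the Markov property together with the tower rule for conditional expectations given $X^t_0$. Throughout, one must be careful about which trajectory, $\v{X}$ (CD controller, still using $\pi_1$ until it stops) or $\v{X}'$ (mode-observing controller, which switches at $\Gamma$), each stage cost is evaluated on, and that these trajectories coincide exactly up to $\min\{\Gamma,\tau\}$.

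First I would handle $r^{\D}_t$, the delay term, since it is cleaner. Conditioned on $\{t\geq\Gamma\}$, stopping at time $t$ means $D_s = 1$ for $\Gamma\le s < t$ while $\Theta_s = 2$: the CD controller wrongly uses $\pi_1$ post-change, the baseline uses $\pi_2$. One writes the cost-to-go from time $t$ onward. The key observation is that once we stop at time $t$, from that point both controllers use $\pi_2$ in mode $2$; the only residual discrepancy is in the \emph{state distributions}: the CD trajectory is at $X_t$ (having run $\pi_1$ in mode $2$ for $t-\Gamma$ steps after $X_\Gamma$), while the baseline trajectory has run $\pi_2$ in mode $2$ for $t-\Gamma$ steps from $X_\Gamma$, i.e., is distributed as $P^{t-\Gamma}_{2|2}(\delta_{X_\Gamma})$. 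Hence the tail regret is $J^\infty_{2|2}(\delta_{X_t}) - J^\infty_{2|2}(P^{t-\Gamma}_{2|2}(\delta_{X_\Gamma}))$, which, after taking the conditional expectation given $X^t_0$ and multiplying by $\mathds{1}_{t\ge\Gamma}$, is exactly \eqref{eq:prop_r_D}. (I would note that the finite pre-$t$ delay costs $\sum_{s=\Gamma}^{t-1}\gamma^{s-t}(\cdots)$ are already accounted for in the Bellman recursion's running term, so $r^{\D}_t$ collects only the cost-to-go after switching — this bookkeeping point is worth stating explicitly to match \eqref{eq:exact_Bellman}.)

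Next, for $r^{\FA}_t$, condition on $\{t<\Gamma\}$ and stopping at $t$: the CD controller switches to $\pi_2$ while the system is still in mode $1$ until $\Gamma$, then transitions to mode $2$. I would split the cost-to-go from $t$ into (i) the segment $t\le s<\Gamma$, during which the CD controller runs $\pi_2$ in mode $1$ and the baseline runs $\pi_1$ in mode $1$, contributing $J^{\Gamma-t}_{2|1}(\delta_{X_t}) - J^{\Gamma-t}_{1|1}(\delta_{X_t})$ (both trajectories start from $X_t$ since they agreed up to $t$); and (ii) the segment $s\ge\Gamma$, during which both controllers use $\pi_2$ in mode $2$, so again only the state distributions differ — the CD trajectory enters $\Gamma$ distributed as $P^{\Gamma-t}_{2|1}(\delta_{X_t})$, the baseline as $P^{\Gamma-t}_{1|1}(\delta_{X_t})$ — giving the discounted term $\gamma^{\Gamma-t}\big(J^\infty_{2|2}(P^{\Gamma-t}_{2|1}(\delta_{X_t})) - J^\infty_{2|2}(P^{\Gamma-t}_{1|1}(\delta_{X_t}))\big)$. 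Taking the conditional expectation over $\Gamma$ (and the randomness in $X_t$) given $X^t_0$, and multiplying by $\mathds{1}_{t<\Gamma}$, yields \eqref{eq:prop_r_FA}.

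The main obstacle is the careful accounting of the two coupled trajectories $\v{X}$ and $\v{X}'$ and ensuring the stage-regret signs in \eqref{eq:stage_regret} are aggregated consistently — in particular, verifying that the telescoping of discounted stage costs into the compact "value function difference" form is legitimate, which requires the Markov property at the random times $t$ and $\Gamma$ and the fact that $\Gamma$ (geometric) is independent of the pre-$\Gamma$ dynamics, so that conditioning on $\{t<\Gamma\}$ does not distort the law of $X^t_0$. A secondary subtlety is interchanging the infinite sum with expectation, which is justified by boundedness of $c$ on the finite space $\Xset\times\Uset$ and $\gamma<1$ (dominated convergence). I would relegate the fully expanded summations to Appendix \ref{app:expected_regret} as the statement indicates, presenting here only the decomposition-at-$\Gamma$ argument and the trajectory-coupling bookkeeping.
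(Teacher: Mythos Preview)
Your proposal is correct and follows essentially the same route as the paper: write the stopping cost-to-go as a discounted sum of stage regrets, split the sum at the random time $\Gamma$, apply the tower rule by conditioning additionally on $\Gamma$, and identify each piece as a finite- or infinite-horizon value function starting from the appropriate state distribution. Your extra remarks on trajectory coupling, the independence of the geometric $\Gamma$ from the pre-change dynamics, and the dominated-convergence justification are sound and a bit more explicit than the paper's appendix, but the underlying argument is the same.
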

Expressions \eqref{eq:prop_r_FA} and \eqref{eq:prop_r_D} provide further insight into the sources of expected regret when stopping at time $t$. In the false alarm scenario, the CD based controller will erroneously use $\pi_2$ in mode 1 for $\Gamma-t$ time steps. The regret incurred during this period consists of a transient component as $\Mcal_{2|1}$ and $\Mcal_{1|1}$ mix to $\Delta_{2|1}$ and $\Delta_{1|1}$, as well as a steady state component due to the difference between $\Delta_{2|1}$ and $\Delta_{1|1}$. Similar transient and steady state components will generally arise when either the mode changes, or the CD based controller declares a change.

Continuing in \eqref{eq:prop_r_FA}, there is a discrepancy between the distributions of the states $X_{\Gamma}$ and $X'_{\Gamma}$ due to the difference in control policies used by the CD and mode observing controllers for $\Gamma-t$ steps.
Similarly, when switching in the delayed detection scenario, there is a discrepancy in the distribution of states $X_t$ and $X'_t$, and thus the expected long term costs over the remaining horizon in mode 2. \\
\indent In terms of selecting a stopping rule, the form of \eqref{eq:prop_r_FA} and \eqref{eq:prop_r_D} introduce substantial difficulties. 
Application of dynamic programming techniques to \eqref{eq:exact_Bellman} requires explicit evaluation of \eqref{eq:prop_r_FA} and \eqref{eq:prop_r_D}, needing computation of finite and infinite horizon expected discounted costs, which often do not yield closed form expressions, requiring numerical estimation. As the change point $\Gamma$ is not observed, these discounted costs must be computed over ranges of potential values for $\Gamma$. Further, as \eqref{eq:prop_r_D} includes terms involving $X_{\Gamma}$, and the continuation option in \eqref{eq:exact_Bellman} involves an expectation over $X'_t$ assuming the change point occurred at some $\Gamma<t$, the entire history of state observations must be retained.
In light of these difficulties, we now develop approximations to $r^{\FA}_t$ and $r^{\D}_t$, which in turn yield a more tractable approximation of the expected regret in \eqref{eq:true_add_cost}.}


\section{Approximating the Total Regret $\Rcal(\tau)$}
\label{sec:fast_mixing}
A primary component of the difficulty in direct use of \eqref{eq:prop_r_FA} and \eqref{eq:prop_r_D} lies in the state dependency of the discounted costs involved. This state dependency follows from the mixing properties of Markov chains associated with combinations of system modes and mode specific policies. Our problem simplifies significantly under the assumption that these Markov chains are fast mixing, i.e., the impact of an initial state or state distribution on expected future states is minimal. 
We have the following result on convergence, i.e., mixing of a Markov chain to its stationary distribution starting from an arbitrary distribution on $\Xset$. Let $\Pcal_{\Xset}$ denote the set of all probability measures on $\Xset$. 


\begin{theorem}[\cite{levin2017markov} Theorem 4.9]\label{thm:levin_mixing}
Let $\Mcal$ be an irreducible and aperiodic Markov chain on $\Xset$, with $t$-step probabilities $P^t$ and stationary distribution $\Delta$. Then there exists  $\beta\in(0,1)$ and $B>0$ such that 
\begin{equation}
    \sup_{\mu\in\Pcal_{\Xset}}\|\mu P^t-\Delta\|_{\TV}\leq B\beta^t.
\end{equation}
\end{theorem}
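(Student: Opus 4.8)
The statement is the classical fact that a finite, irreducible, aperiodic chain is uniformly ergodic with a geometric rate, and the plan is to prove it by Doeblin minorization together with a total-variation contraction. First I would show that the hypotheses force the chain to be \emph{primitive}: there is a positive integer $r$ with $P^r(x,y)>0$ for all $x,y\in\Xset$. This is standard — aperiodicity gives, for each $x$, an $N_x$ with $P^n(x,x)>0$ for $n\ge N_x$, irreducibility lets one reach any $y$ from any $x$ in a bounded number of steps, and finiteness of $\Xset$ lets one take $r$ to be the maximum over the finitely many state pairs.

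Next I would record the minorization. Set $\alpha:=\sum_{y\in\Xset}\min_{x\in\Xset}P^r(x,y)\in(0,1]$ (positivity of the entries of $P^r$ and finiteness of $\Xset$ are used here), and let $\nu$ be the probability measure $\nu(y):=\alpha^{-1}\min_x P^r(x,y)$, so that $P^r(x,\cdot)=\alpha\,\nu(\cdot)+(1-\alpha)\,Q(x,\cdot)$ for a stochastic matrix $Q$. For any probability measures $\mu_1,\mu_2$ on $\Xset$ the $\nu$-component cancels, giving $\mu_1 P^r-\mu_2 P^r=(1-\alpha)(\mu_1 Q-\mu_2 Q)$; since applying a stochastic matrix does not increase total-variation distance, this yields the one-block contraction $\|\mu_1 P^r-\mu_2 P^r\|_{\TV}\le(1-\alpha)\|\mu_1-\mu_2\|_{\TV}$.

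Finally I would iterate and interpolate. Putting $\mu_2=\Delta$ and using $\Delta P^r=\Delta$, an induction on $k$ gives $\sup_{\mu}\|\mu P^{kr}-\Delta\|_{\TV}\le(1-\alpha)^{k}$, because the total-variation distance of two probability measures never exceeds $1$. For general $t$, write $t=kr+s$ with $0\le s<r$ and use $\mu P^{t}=(\mu P^{s})P^{kr}$, with $\mu P^s$ again a probability measure, to get $\|\mu P^t-\Delta\|_{\TV}\le(1-\alpha)^{\lfloor t/r\rfloor}$. Taking $\beta:=(1-\alpha)^{1/r}\in(0,1)$ and $B:=(1-\alpha)^{-1}$ then gives the claimed bound $\sup_{\mu\in\Pcal_{\Xset}}\|\mu P^t-\Delta\|_{\TV}\le B\beta^{t}$, uniformly in $\mu$.

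The only step with real content is the first one — that finite $+$ irreducible $+$ aperiodic implies primitive — and this is exactly the combinatorial core of Perron--Frobenius theory; everything afterward is bookkeeping. An alternative I would consider, avoiding the minorization, is a coupling proof: run one copy of the chain from $\mu$ and one from $\Delta$, let them move independently until they first meet and then coalesce; primitivity gives a uniform positive lower bound on the chance of meeting within each block of $r$ steps, and the coupling inequality turns this into the geometric decay. I would present the minorization version, as it is the shortest to write in full.
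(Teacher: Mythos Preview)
Your proof is correct and is essentially the standard argument; it is in fact the proof given in the cited reference \cite{levin2017markov}. Note, however, that the paper itself does not prove this theorem at all --- it is simply quoted from the literature (hence the citation in the theorem header) and used as a tool in the subsequent approximation analysis. So there is nothing to compare against: your Doeblin minorization argument is exactly the content of Theorem~4.9 in \cite{levin2017markov}, and the paper relies on that result as a black box.
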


When there is a change in policy, mode or both, the overall system switches from one Markov chain into another, with initial state distribution affected by the preceding trajectory. Theorem \ref{thm:levin_mixing} allows us to bound the difference in cost to go between an arbitrary initial state distribution, and the stationary distribution of the Markov chain entered.
Let 
\begin{equation}
    c_i: = [c(x_{(1)},\pi_i(x_{(1)})),\dots,c(x_{(|\Xset|)},\pi_i(x_{(|\Xset|)}))]^{\top}\in \Rset^{|\Xset|}
    \label{eq:ci.def}
\end{equation}
denote the cost vector when using policy $i=1,2$, where $\{x_{(k)}\}_{k=1}^{|\Xset|}$ is an ordering of the states compatible with $\Delta_{i|j}$ for $i,j=1,2$.
Let ${E}^k(\mu,\Delta_{i|j})$ denote the difference between the $k$-step cost-to-go functions, given that the system enters $\MC(\Delta_{i|j})$ with initial state distributed according to $\mu$, i.e.,
\begin{equation}\label{eq:mixing_error}
\begin{split}
    {E}^k(\mu,\Delta_{i|j}) &:=|J^k_{i|j}(\mu) - J^{k}_{i|j}(\Delta_{i|j})|\\
    &= \left|J^k_{i|j}(\mu) - \frac{1-\gamma^{k}}{1-\gamma}c_i^{\top}\Delta_{i|j}\right|,
\end{split}
\end{equation}
where $c_i$ is defined in \eqref{eq:ci.def}. Assuming fast mixing, we neglect error terms of the form in \eqref{eq:mixing_error} and use the  approximation
\begin{equation}
    J^k_{i|j}(\mu) \approx J^k_{i|j}(\Delta_{i|j})= \frac{1-\gamma^k}{1-\gamma}c_i^{\top}\Delta_{i|j},
\end{equation}
 for $i,j=1,2$ and $k\leq\infty$.
An upper bound on the resulting error is then given in the following result that is proven in Appendix \ref{app:fast_mixing_regret}. 
\begin{lemma}\label{lem:mixing_error}
Let $\beta_{i|j}\in(0,1)$ and $B_{i|j}>0$ be the mixing coefficients of $\Mcal_{i|j}$ per Theorem \ref{thm:levin_mixing}. If $\mu\in\Pcal_{\Xset}$, then 
\begin{equation}
    {E}^k(\mu,\Delta_{i|j})\leq 2\|c_i\|_{\infty}B_{i|j}\frac{1-(\gamma\beta_{i|j})^{k}}{1-\gamma\beta_{i|j}}.
\end{equation}
 \end{lemma}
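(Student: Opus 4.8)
The plan is to expand both $k$-step cost-to-go functions as discounted sums of expected stage costs, take the difference term-by-term, and bound each term using H\"older's inequality together with the mixing bound of Theorem \ref{thm:levin_mixing}. Write $P_{i|j}$ for the transition matrix of $\Mcal_{i|j}$ on $\Xset$, viewed as acting on row-vector distributions. Then, since the stage cost under $\pi_i$ is the vector $c_i$ of \eqref{eq:ci.def},
\begin{equation}
    J^k_{i|j}(\mu) = \sum_{t=0}^{k-1}\gamma^t\, c_i^{\top}\!\left(\mu P_{i|j}^t\right),
    \qquad
    J^k_{i|j}(\Delta_{i|j}) = \sum_{t=0}^{k-1}\gamma^t\, c_i^{\top}\Delta_{i|j},
\end{equation}
where the second identity uses stationarity, $\Delta_{i|j}P_{i|j}^t = \Delta_{i|j}$ for all $t$, which also recovers the closed form $\tfrac{1-\gamma^k}{1-\gamma}c_i^{\top}\Delta_{i|j}$ appearing in \eqref{eq:mixing_error}.

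Next I would subtract the two expressions and apply the triangle inequality followed by H\"older:
\begin{equation}
    E^k(\mu,\Delta_{i|j}) = \left|\sum_{t=0}^{k-1}\gamma^t\, c_i^{\top}\!\left(\mu P_{i|j}^t - \Delta_{i|j}\right)\right|
    \leq \sum_{t=0}^{k-1}\gamma^t\,\|c_i\|_{\infty}\,\big\|\mu P_{i|j}^t - \Delta_{i|j}\big\|_{1}.
\end{equation}
The key input is the conversion between the $\ell_1$ norm and total variation, $\|\nu - \Delta\|_{1} = 2\|\nu-\Delta\|_{\TV}$ for probability measures, which lets me invoke Theorem \ref{thm:levin_mixing} to get $\|\mu P_{i|j}^t - \Delta_{i|j}\|_{1} \leq 2B_{i|j}\beta_{i|j}^t$. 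Substituting and summing the resulting geometric series in ratio $\gamma\beta_{i|j}\in(0,1)$ gives
\begin{equation}
    E^k(\mu,\Delta_{i|j}) \leq 2\|c_i\|_{\infty}B_{i|j}\sum_{t=0}^{k-1}(\gamma\beta_{i|j})^t = 2\|c_i\|_{\infty}B_{i|j}\,\frac{1-(\gamma\beta_{i|j})^{k}}{1-\gamma\beta_{i|j}},
\end{equation}
which is the claimed bound; the case $k=\infty$ follows by letting $k\to\infty$, the sum converging precisely because $\gamma\beta_{i|j}<1$.

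Honestly, there is no serious obstacle here — the argument is a routine geometric-series estimate once the two representations of $J^k_{i|j}$ are in place. The only points requiring a word of care are (i) making sure the supremum over initial distributions in Theorem \ref{thm:levin_mixing} is applied with $\mu$ in place of the worst case, which is immediate since $\mu\in\Pcal_{\Xset}$, and (ii) the factor-of-two bookkeeping between $\|\cdot\|_1$ and $\|\cdot\|_{\TV}$, which accounts for the leading constant $2$ in the statement. One should also note that $\Mcal_{i|j}$ being irreducible and aperiodic (so that Theorem \ref{thm:levin_mixing} applies) is implicitly assumed; if this is not guaranteed by the MDP setup it would need to be added as a hypothesis.
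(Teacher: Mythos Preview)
Your argument is correct and matches the paper's own proof essentially line for line: both expand $J^{k}_{i|j}$ as a discounted sum of expected stage costs, bound the difference termwise via $\|c_i\|_\infty\|\mu P_{i|j}^t-\Delta_{i|j}\|_1$, convert the $\ell_1$ norm to twice the total variation distance, apply Theorem~\ref{thm:levin_mixing}, and sum the resulting geometric series in $\gamma\beta_{i|j}$.
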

\label{sec:add_cost}

Under the assumption of fast mixing, if we switch policies at time $t$, then the regret-to-go under false alarm and delay situations can be approximated using Proposition \ref{thm:levin_mixing} as follows.
\begin{equation}
    \begin{split}
        r^{\FA}_t &\approx\E\Bigg[\mathds{1}_{t<\Gamma}\bigg(\frac{1-\gamma^{\Gamma-t}}{1-\gamma}(c_2^{\top}\Delta_{2|1}-c_1^{\top}\Delta_{1|1})+ \frac{\gamma^{\Gamma-t}}{1-\gamma}c^{\top}_2(\Delta_{2|2}-\Delta_{2|2})\bigg)|X^t_0\Bigg]
        \\
        &= (c_2^{\top}\Delta_{2|1}-c_1^{\top}\Delta_{1|1})\E\left[\mathds{1}_{t<\Gamma}\frac{1-\gamma^{\Gamma-t}}{1-\gamma}|X^t_0\right],
    \end{split}
\end{equation}
\begin{equation}
    \begin{split}
        &r^{\D}_t\approx \E\Bigg[\mathds{1}_{t\geq\Gamma}\Bigg(\frac{\gamma^{t-\Gamma}}{1-\gamma}c_2^{\top}(\Delta_{2|2}-\Delta_{2|2})\Bigg)|X^t_0\Bigg]=0.
    \end{split}
\end{equation}
We are interested in analyzing the regret for discounted cost settings, where the discount factor is close to unity, as we seek to achieve good performance in both the pre- and post-change regimes. Smaller discount factors effectively penalize false alarms more heavily than delayed detection as the penalty on future costs are discounted more heavily. Thus, with $\gamma = 1-\epsilon$ with $\epsilon\ll 1$, we have
\begin{equation}
    \lim_{\epsilon\to 0}\frac{1-\gamma^{m}}{1-\gamma} = \lim_{\epsilon\to 0}\frac{1-(1-\epsilon)^{m}}{1-(1-\epsilon)} =  m
\end{equation}
for any $m > 0$, using which, we simplify $r^{\FA}_t$ as
\begin{equation}
    \begin{split}
        r^{\FA}_t&\approx (c^{\top}_2\Delta_{2|1}-c^{\top}_1\Delta_{1|1})\E\left[\mathds{1}_{t<\Gamma}(\Gamma-t)|X^t_0\right]= (c^{\top}_2\Delta_{2|1}-c^{\top}_1\Delta_{1|1})\E\left[(\Gamma-t)_+|X^t_0\right],
    \end{split}
\end{equation}
implying
\begin{equation}
    \begin{split}
        &\E[r^{\FA}_t] \approx (c^{\top}_2\Delta_{2|1}-c^{\top}_1\Delta_{1|1})\E[(\Gamma-t)_+].
    \end{split}
\end{equation}
Again with fast mixing, the single-step component of the continuation cost in \eqref{eq:exact_Bellman} can be approximated as
\begin{equation}
\begin{split}
    &\E\left[\mathds{1}_{\Gamma<t}\left(c(X_t,\pi_1(X_t))-c(X'_t,\pi_2(X'_t))\right)|X^t_0\right]\approx \E\left[\mathds{1}_{\Gamma<t}(c(X_t,\pi_1(X_t))-c^{\top}_2\Delta_{2|2})|X^t_0\right],
\end{split}
\end{equation}
since, given the event $\{\Gamma<t\}$, $X'_t$ will be distributed according to approximately $\Delta_{2|2}$. Approximating $\gamma\approx 1$, we may now write the following approximation of the Bellman equation in \eqref{eq:exact_Bellman},
\begin{equation}\label{eq:approx_Bellman}
{
\begin{split}
    \hat{V}(X^t_0) 
    &= \min_{d_t} \bigg\{(c^{\top}_2\Delta_{2|1}-c^{\top}_1\Delta_{1|1})\E[(\Gamma-t)_+|X^t_0] \mathds{1}_{d_t=2} 
    \\
    & \qquad \qquad + \E\bigg[\mathds{1}_{\Gamma<t}(c(X_t,\pi_1(X_t))-c^{\top}_2\Delta_{2|2}+ \hat{V}(X^{t+1}_0))|X^t_0,d_t\bigg]\mathds{1}_{d_t=1}\bigg\}.
\end{split}
}
\end{equation}
The optimal value functions $\hat{V}$ satisfy the Bellman equation of the following optimal control problem that optimizes over all causal control-switching policies $\Scal$
\begin{equation}\label{eq:interim_opt}
    \begin{split}
       &\min_{\tau\in\Scal}\,\,\E\Bigg[\sum_{t=\Gamma}^{\tau-1}c(X_t,\pi_1(X_t))-c^{\top}_2\Delta_{2|2}(\tau-\Gamma)_++(c^{\top}_2\Delta_{2|1}-c^{\top}_1\Delta_{1|1})(\Gamma-\tau)_+\Bigg]. 
    \end{split}
\end{equation}
Reapplying the fast mixing approximation, we have 
\begin{equation}\label{eq:last_delay_app}
    \E\left[\sum_{t=\Gamma}^{\tau-1}c(X_t,\pi_1(X_t))\right]\approx c^{\top}_1\Delta_{1|2}(\tau-\Gamma)_+
\end{equation}
for all $\tau \in \Scal$.
Combining \eqref{eq:interim_opt} and \eqref{eq:last_delay_app}, we arrive at the approximate regret minimization problem, given by
\begin{equation}
\begin{split}
    &\min_{\tau\in\Scal}\,\,(c^{\top}_2\Delta_{2|1}-c^{\top}_1\Delta_{1|1})\E[(\Gamma-\tau)_+]+ (c^{\top}_1\Delta_{1|2}-c^{\top}_2\Delta_{2|2})\E[(\tau-\Gamma)_+]=:\tilde{\Rcal}(\tau). 
\end{split}
\label{eq:approx_regret_min}
\end{equation}

In view of Lemma \ref{lem:mixing_error}, $\tilde{\Rcal}(\tau)$ will be a good approximation to $\Rcal(\tau)$, when the constants $\beta_{i|j}$ for $i,j=1,2$ are small. Roughly speaking, $\beta_{i|j}$ is small when the spectral gap of $\Mcal_{i|j}$ is large, i.e. $\beta_{i|j}$ depends largely on the second leading eigenvalue of $\Mcal_{i|j}$ (see \cite{levin2017markov} Chapter 12). Additionally, a switching policy optimizing \eqref{eq:approx_regret_min} will prioritize minimization of false alarm and delay based regret based upon the difference in average stage costs in the involved Markov chains. For instance, when $c^{\top}_2\Delta_{2|1}-c^{\top}_1\Delta_{1|1}\ll c^{\top}_1\Delta_{1|2}-c^{\top}_2\Delta_{2|2}$, the controller may exhibit longer average false alarm periods. However, if the expected unit step loss in erroneously using $\pi_2$ in mode 1 is not too large, then assuming fast mixing, the aggregate regret due to such errors should likewise be small. 
In what follows, we seek to identify stopping rules that exactly solve \eqref{eq:approx_regret_min}.
Assume a geometric prior on the change point $\Gamma$ with $\prob\{\Gamma=t\}=\rho(1-\rho)^{t-1}$ for $t\geq 1$. Then, the memoryless property of geometric distributions yields $ \E[(\Gamma-\tau)_+] 
    = P\{\Gamma>\tau\}/\rho$.
Assuming $c^{\top}_1\Delta_{1|2}-c^{\top}_2\Delta_{2|2}>0$, we further get
\begin{equation}\label{eq:rescaled_Rcal_approx}
    \begin{split}
        \tilde{\Rcal}(\tau) \propto \E[(\tau-\Gamma)_+] + \lambda P\{\Gamma>\tau\},
    \end{split}
\end{equation}
where 
\begin{equation}\label{eq:rescaled_lambda}
    \lambda\rho = \frac{c^{\top}_2\Delta_{2|1}-c^{\top}_1\Delta_{1|1}}{c^{\top}_1\Delta_{1|2}-c^{\top}_2\Delta_{2|2}}.
\end{equation}

The above argument requires both the numerator and the denominator in \eqref{eq:rescaled_lambda} to be non-negative. Notice that when $X_0 \sim \mu_0$, we have
\begin{align}
\begin{aligned}
    J^{\infty}_{1|1}(\mu_0)
    &= \frac{1}{1-\gamma} c_1^\top \Delta_{1|1} + {E}^\infty(\mu_0, \Delta_{1|1}) \leq  \frac{1}{1-\gamma} c_2^\top \Delta_{2|1} + {E}^\infty(\mu_0, \Delta_{2|1}),
\end{aligned}
\end{align}
owing to the optimality of $\pi_1$ for the MDP with in mode 1.
In essence, this implies that 
\begin{align}
    c^{\top}_2\Delta_{2|1}-c^{\top}_1\Delta_{1|1} \geq (1-\gamma)\left[ {E}^\infty(\mu_0, \Delta_{1|1}) - {E}^\infty(\mu_0, \Delta_{2|1}) \right].
\end{align}
As long as $\Mcal_{2|1}$ and $\Mcal_{1|1}$ are fast-mixing, we expect the errors due to mixing to be small, and the left-hand side to be positive. A similar argument applies to \eqref{eq:rescaled_lambda}'s denominator.



\section{Optimal QCD Policy for Approximate Regret Minimization}
\label{sec:opt_qcd_approx}

This section is devoted to the derivation of the optimal change detector to minimize the approximate regret $\tilde{\Rcal}$. Specifically, this regret under a causal control switching policy $\tau \in \Scal$ in \eqref{eq:approx_regret_min} can be written as
\begin{equation}\label{eq:Rcal_additive}
    \tilde{\Rcal}(\tau) \propto \E\left[\sum_{t=0}^{\tau}\tilde{g}_t(D_t,\Theta_t)\right],
\end{equation}
where $X_0 \sim \mu_0$ and
\begin{equation}\label{eq:rescaled_g}
    \tilde{g}_t(D_t,\Theta_t) = \begin{cases}
    1, &\text{if }D_t=1,\Theta_t=2,\\
    \lambda, &\text{if }D_t=2,\Theta_t=1, \\
    0, &\text{otherwise}.
    \end{cases}
\end{equation}
The optimal control problem in \eqref{eq:Rcal_additive} is amenable to infinite horizon dynamic programming. For $t\geq 1$, define
\begin{equation}
    p_t := \prob\{\Gamma\leq t-1|X_0^t\}=\prob\{\Gamma< t|X_0^t\}
\end{equation}
as the posterior probability at time $t$ that the change has taken place prior to time $t$, with $p_0=0$.
This probability is a sufficient statistic in the sense that $(X_t, p_t)$ defines a Markovian state for dynamic programming calculations for this optimal control problem, following  \cite{bertsekas2012dynamic}.

Let $\pi^{\CD}$ denote the control policy that uses change detection to switch between $\pi_1$ and $\pi_2$. Specifically, we have
\begin{align}
    \pi_{\CD}(X_t) 
    = \begin{cases}
    \pi_1(X_t), &\text{if } t<\tau,
    \\
    \pi_2(X_t), &\text{otherwise}.
    \end{cases}
\end{align}
Then, define
\begin{align}
    L(X_{t+1},X_t) := \frac{P_2(X_{t+1}|X_t,\pi_{\CD}(X_t))}{P_1(X_{t+1}|X_t,\pi_{\CD}(X_t))}.
    \label{eq:L.def}
\end{align}
Recall that $P_i$ encodes the transition kernel in mode $\theta=i$. Thus, $L$ becomes the likelihood ratio of observing $X_{t+1}$ in mode 2 compared to that in mode 1. Further, define 
\begin{align}
        \bar{p}_t := p_t + \rho(1-p_t).
        \label{eq:p_tilde_def}
\end{align}
Then, the evolution of $p_t$ via Bayes' rule becomes
\begin{align}
    \label{eq:p_recursion}
    \hspace{-0.1in}p_{t+1} = \Phi(X_{t+1},X_t,p_t) := \frac{\bar{p}_t L(X_{t+1},X_t)}{\bar{p}_t L(X_{t+1},X_t) + (1-\bar{p}_t)}.
\end{align}
Per the dynamic programming principle, the optimal cost of the minimization of the right hand side of \eqref{eq:Rcal_additive} is given by $\E_{X_0\sim \mu_0} [\tilde{V}(0, X_0)]$, where the cost-to-go function $\tilde{V}$ satisfies the Bellman equation,
\begin{equation}
    \begin{split}
        &\tilde{V}(p_t,X_t) = 
        \min_{d_t}\, \left\{\lambda(1-p_t)\mathds{1}_{d_t=2} + \Big(p_t+ \E[\tilde{V}(p_{t+1},X_{t+1})|p_t,X_t,d_t]\Big)\mathds{1}_{d_t=1}\right\}.
    \end{split}
    \label{eq:qcd_Bellman}
\end{equation}
The expectation in the right hand side of the above equation is calculated as follows. First, fix $\pi_{\CD}(X_t) = \pi_1(X_t)$. Then, the distribution of $X_{t+1}$ is either that dictated by $P_1$ or $P_2$, depending on whether the change point has happened before $t$, while the action is taken according to $\pi_1(X_t)$. For each candidate $X_{t+1}$, evaluate $L$ using \eqref{eq:L.def} with $\pi_{\CD}(X_t) = \pi_1(X_t)$ and $p_{t+1}$ via \eqref{eq:p_recursion}.

The following result, sketched in Appendix \ref{app:optimal_thresh}, characterizes the nature of the optimal change detection policy. The result needs additional notation. Let $\Ccal$ denote the set of nonnegative functions of $p\in[0,1]$, dominated by $\lambda(1-p)$. 
Define the right hand side of \eqref{eq:qcd_Bellman} as $\Bcal\tilde{V}$ for $\Bcal:\Ccal\to\Ccal$.
\begin{theorem}\label{thm:optimal_thresh}
There exists a unique solution to the fixed-point equation $\Bcal\tilde{V} = \tilde{V}$ in $\Ccal$, which is given by $\tilde{V} = \lim_{k\to \infty} \Bcal^k \psi$, where $\psi(p,x) := \lambda(1-p)$, for $p\in[0,1]$ and $x \in \Xset$.
Moreover, an optimal control-switching policy to minimize $\tilde{\Rcal}(\tau)$ is threshold-type, given by $    \tau^{\star} = \min\{t:p_t\geq {p}^\circ(X_t)\}$,
where $p^{\circ}(X_t)\in[0,1]$ is the unique solution of 
\begin{align}
    \begin{aligned}
    &\lambda(1-{p}^\circ(X_t))  - {p}^\circ(X_t)=
    \E[\tilde{V}(\Phi(X_{t+1},X_t,p^{\circ}(X_t)), X_{t+1})|{p}^\circ(X_t),X_t,d_t=1].
    \end{aligned}
\end{align}

\end{theorem}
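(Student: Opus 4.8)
The plan is to recognize \eqref{eq:qcd_Bellman} as the Bellman equation of a standard optimal stopping problem with cost-to-go structure analogous to the classical Bayesian QCD (Shiryaev) problem, but with a state-dependent likelihood ratio $L(X_{t+1},X_t)$ entering the posterior recursion $\Phi$. First I would establish that $\Bcal$ maps $\Ccal$ into itself: if $\tilde V \in \Ccal$, i.e. $0 \le \tilde V(p,x) \le \lambda(1-p)$ for all $x$, then the "stop" branch $\lambda(1-p_t)$ and the "continue" branch $p_t + \E[\tilde V \mid \cdot]$ are both nonnegative, and taking the minimum keeps the result bounded above by $\lambda(1-p_t)$; one checks the continuation value is itself nonnegative and that the minimum is dominated by the stopping value by construction. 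Then I would show $\Bcal$ is a contraction (or at least monotone with a discounting-type property) — here the key observation is that the continuation term contributes $p_t$ plus a conditional expectation of a function that is already $\le \lambda(1-p_{t+1})$, and since $\bar p_t \ge \rho > 0$ forces the posterior to drift upward, there is an effective contraction factor strictly below one. This gives existence and uniqueness of the fixed point and the value-iteration characterization $\tilde V = \lim_k \Bcal^k \psi$ with $\psi(p,x) = \lambda(1-p)$, by the Banach fixed point theorem applied on $\Ccal$ with the sup-norm.

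Next I would prove the threshold structure. The decision at $(p_t, X_t)$ is to stop iff $\lambda(1-p_t) \le p_t + \E[\tilde V(\Phi(X_{t+1},X_t,p_t),X_{t+1}) \mid p_t, X_t, d_t=1]$. Define $G(p,x) := \lambda(1-p) - p - \E[\tilde V(\Phi(X_{t+1},x,p), X_{t+1}) \mid p, x, d_t=1]$; the stopping region in the $p$-coordinate for fixed $x$ is $\{p : G(p,x) \le 0\}$. I would argue: (i) $p \mapsto \lambda(1-p) - p$ is strictly decreasing; (ii) $p \mapsto \E[\tilde V(\Phi(X_{t+1},x,p),X_{t+1}) \mid \cdot]$ is nondecreasing in $p$ — this follows because $\Phi$ is monotone increasing in $p_t$ (immediate from the Möbius form of \eqref{eq:p_recursion}, as $\bar p_t$ is increasing in $p_t$) and because the fixed point $\tilde V(\cdot,x)$ is nondecreasing in $p$ for each $x$, a property preserved by $\Bcal$ and hence inherited in the limit (the continue branch is $p_t$ plus an expectation of the nondecreasing-in-$p_{t+1}$ function $\tilde V$ composed with the nondecreasing $\Phi$, and the stop branch $\lambda(1-p)$ is decreasing, so the min of an increasing and a decreasing function... — actually one must be slightly careful here: monotonicity of the min is not automatic, so I would instead directly show the continuation value alone is nondecreasing in $p$, which suffices). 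Consequently $G(\cdot,x)$ is strictly decreasing, crosses zero at most once, giving a unique threshold $p^\circ(x)$ solving $G(p^\circ(x),x)=0$, and the optimal rule is $\tau^\star = \min\{t : p_t \ge p^\circ(X_t)\}$. Existence of a root in $[0,1]$ follows from $G(1,x) = -1 - \E[\tilde V(\cdot)] < 0$ and $G(0,x) = \lambda - \E[\tilde V(\Phi(X_{t+1},x,0),X_{t+1}) \mid \cdot] \ge 0$ using $\tilde V \le \lambda(1-p) \le \lambda$ together with the fact that at $p=0$, $\bar p_0 = \rho$ so $\Phi$ need not vanish — one needs the inequality $\E[\tilde V] \le \lambda$, which holds since $\tilde V \le \lambda(1-p_{t+1}) \le \lambda$; strict positivity at $p=0$ when $x$ is such that the change is still "plausible" handles the nontrivial case, and if $G(0,x) < 0$ then $p^\circ(x)=0$ and one stops immediately, still consistent with the threshold form.

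The main obstacle I anticipate is the monotonicity-in-$p$ argument for the continuation value and its interaction with the state coordinate $x$: unlike the classical i.i.d. Shiryaev setup, here $\tilde V$ depends on $X_t$ and the posterior update $\Phi$ mixes $p_t$ with the state-dependent ratio $L(X_{t+1},X_t)$, so I must verify that the class of functions that are nonnegative, nondecreasing in $p$, and dominated by $\lambda(1-p)$ — \emph{uniformly in $x$} — is preserved by $\Bcal$, and that the conditional expectation over $X_{t+1}$ (whose law itself depends on whether $\Gamma < t$, i.e. on $p_t$) does not destroy monotonicity. The resolution is that conditioning on $\{X_0^t\}$ the law of $X_{t+1}$ is a $p_t$-mixture of $P_1(\cdot\mid X_t, \pi_1(X_t))$ and $P_2(\cdot\mid X_t,\pi_1(X_t))$ with weight $\bar p_t$ on $P_2$, so the continuation value is an explicit affine-in-$\bar p_t$-then-composed-with-$\Phi$ expression; differentiating/comparing in $p_t$ and using $d\bar p_t/dp_t = 1-\rho > 0$ closes the argument. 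I would also need to confirm $\tilde V(0,x)$ and the expectations are finite, which follows from the uniform bound $\tilde V \le \lambda$ and finiteness of $\Xset$.
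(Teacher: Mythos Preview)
Your plan has two gaps, one repairable and one not. For existence and uniqueness, $\Bcal$ is only nonexpansive in sup-norm; the ``effective contraction'' you allude to can be made rigorous, but only in a weighted norm such as $\|V\|_w := \sup_{p<1,\,x} |V(p,x)|/(1-p)$, using the identity $\E[1-\Phi(X^+,x,p)\mid p,x,\pi_1(x)] = (1-\rho)(1-p)$ to obtain a factor $(1-\rho)$. The paper avoids contraction altogether: it solves the $T$-horizon problem, shows the value functions $\tilde V^T_t$ decrease in $T$ to a limit $\tilde V$, verifies $\Bcal\tilde V=\tilde V$ via dominated convergence, and proves uniqueness by showing that any fixed point $G\in\Ccal$ satisfies both $G\ge\tilde V$ (since $G$ is the cost of an admissible stopping rule $\tau_G\in\Scal$) and $G\le\tilde V^T_t$ for all $T,t$ (backward induction starting from $G\le\psi=\tilde V^T_T$).

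The monotonicity route to the threshold, however, fails outright. You correctly flag that $\tilde V(\cdot,x)$ need not be monotone, but your fallback---that the continuation value $C(p,x)=p+\E[\tilde V(\Phi,X^+)\mid p,x]$ is nondecreasing in $p$---is also false in general: already at the one-step horizon the continuation value is $p+\lambda(1-\rho)(1-p)$, whose slope $1-\lambda(1-\rho)$ is negative whenever $\lambda>1/(1-\rho)$. The paper instead uses \emph{concavity}. By backward induction, and crucially via the identity $\E[\Phi(X^+,x,p)\mid p,x,\pi_1(x)]=\bar p=p+\rho(1-p)$ (affine in $p$), one shows that $A^T_t(\cdot,x)$ and hence $\tilde V^T_t(\cdot,x)$ are concave on $[0,1]$ for every $x$, with $A^T_t(1,x)=0$. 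Passing to the limit, $p+A(p,x)-\lambda(1-p)$ is concave and equals $1>0$ at $p=1$, so its nonnegativity set is an interval of the form $[p^\circ(x),1]$; this is what delivers the state-dependent threshold rule, not monotonicity.
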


We prove the existence of value functions that satisfy the Bellman equation. In discounted-cost optimal control problems, such existence is typically argued via  the Contraction Mapping Theorem. Our QCD problem for approximate regret is undiscounted; a classic contraction mapping-based argument does not work. Our proof structure shows the result over a finite horizon first, and then extends it to the infinite-horizon setting using an argument similar to that used in \cite{veeravalli1993decentralized}.

Next, we contextualize this result within prior art on QCD. Per \cite{shiryaev2007optimal, veeravalli2014quickest}, Bayesian QCD with i.i.d. data under a geometric prior on the change point yields a single-threshold stopping rule. Unlike that setting, our data is Markovian. Theorem \ref{thm:optimal_thresh} reveals that a similar single-threshold policy is optimal with Markovian data too, but with the modification that the threshold is state-dependent. While we derive this stopping rule by viewing $\tilde{\Rcal}(\tau)$ as the approximate regret in switching control policies for MDPs under an environment with a changing transition kernel, the result applies more generally to Bayesian change detection with Markovian observations, and might be of independent interest to the QCD literature.
\section{Numerical Experiments}\label{sec:numerical}
We now illustrate properties of our controller switching through numerical examples. In our simulations, the  finite horizon and the geometric prior were such that pre- and post-change Markov chains mix sufficiently.

\subsection{On a Random MDP Environment}
We simulated the dynamics of a non-stationary MDP with $|\Xset| = 5$ and $|\Uset|=3$. Entries of $P_1$ were generated from $\text{Unif}[0,1]$, and then rescaled to yield a transition kernel. $P_2$ was obtained by permuting $P_1$ with respect to actions, e.g., $P_1(x'|x,a=0) = P_2(x'|x,a=1)$, and so on. Costs for each state/action pair were sampled from $\text{Unif}[0,1]$. We used a geometric prior for a range of $\rho$'s, listed in Table \ref{tab:MDP_stats}. With $\gamma=0.999$, we used value iteration to compute $\pi_1$ and $\pi_2$, by uniformly discretizing the posterior probability values to 1000 points in [0,1]. The change detection-based controller was computed via repeated application of the Bellman operator $\Bcal$ on $\psi$. We recorded the discounted costs over horizons of lengths $H = \lceil2/\rho\rceil$, averaged over 6000 runs, calculated with mode observation-based controller switching in $J_{\MO}$ and that based on change detection in $J_{\CD}$. The difference between the mean costs are all significant ($p$-values $< 0.02$). $J_{\CD}$ is within 0.7\% of $J_{\MO}$, implying that change detection-based switching performs quite well, compared to mode-observed switching.


With the right hand side of \eqref{eq:rescaled_lambda} held constant, $\lambda$--the relative weight of regret due to false alarm compared to that due to delay--varies inversely with $\rho$. Thus, as $\rho$ increases, $\lambda$ decreases, and the state dependent thresholds $p^\circ$ for various states decrease as Figure \ref{fig:thresh_PFA} reveals. In essence, the algorithm becomes more keen to switch from $\pi_1$ to $\pi_2$, incurring possibly higher regret due to false alarm that is penalized less. Figure \ref{fig:thresh_PFA} confirms that the probability of false alarm grows with $\rho$.

\begin{figure}[h]
\centering
\includegraphics[width=0.5\textwidth]{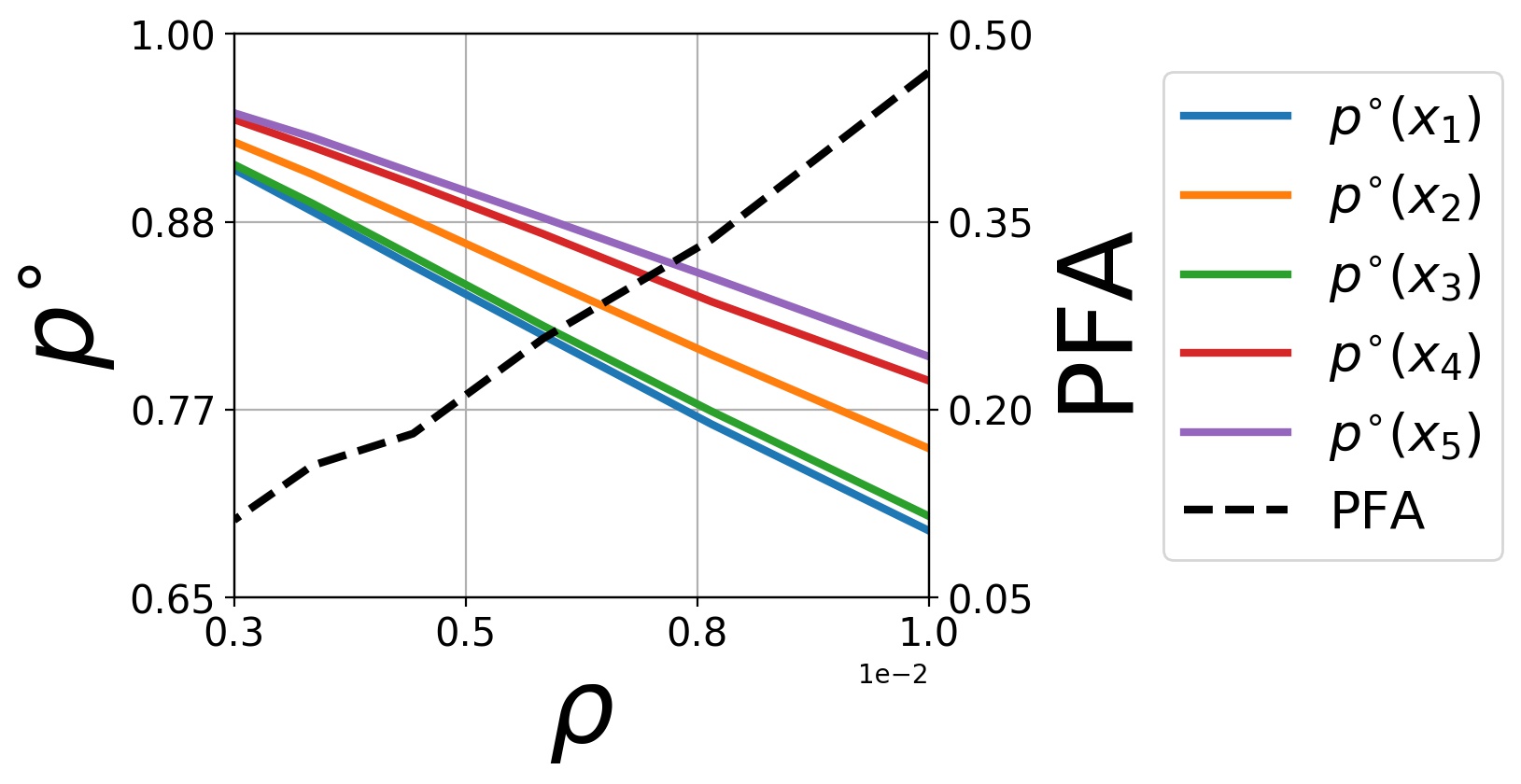}
\caption{State dependant switching thresholds and probability of false alarm (PFA) for the random MDP.}
\label{fig:thresh_PFA}
\end{figure}


\subsection{Inventory Control Problem}
Next, we present results on a non-stationary environment, studied in \cite{banerjee2017quickest}. The (random) state $X_t$ represents the inventory level at time $t$ with maximum inventory level $N$. Let $U_t$ denote the additional inventory to be ordered, yielding total inventory $I_t:=\min\{X_t + U_t, N\}$. Having selected $U_t$, the random integral demand $W_t$ is realized, giving residual inventory $X_{t+1}=\max\{0,X_t+U_t-W_t\}$. With $v$, $h$ and $d$ denoting the costs per unit ordered, inventory held, and demand lost,
%
\begin{equation}
    c(X_t,U_t) := v X_t + h[I_t-W_t]_+ + d[W_t-I_t]_+.
\end{equation}
The demands are distributed according to $W_t\sim\pois(\nu)$ for $t< \Gamma$, and $W_t\sim\text{Unif}[0,N]$ afterwards. Conditioned on the change point, the demands are independent. We assume a  geometric prior on the change point with $\rho=0.01$. We run tests with $N\in\{10,15\}$ and $d\in\{100,200,300\}$. The remaining parameters are set to $v=1$, $h=5$, $\gamma=0.999$, $\nu=2$. For this example, define 
\begin{equation}
    \hspace{-0.1in}c_{i|j} := [\E_j[c(0,\pi_i(0)],\dots,\E_j[c(N,\pi_i(N))]]^{\top}\in\Rset^{N+1},
\end{equation}
where $\E_j$ stands for expectation with respect to the demand distribution in mode $j\in\{1,2\}$. In this example, the cost vectors depend on the mode as well as the policy, giving 
\begin{equation}\label{eq:lambda_mode_dep}
    \lambda\rho = \frac{c^{\top}_{2|1}\Delta_{2|1}-c^{\top}_{1|1}\Delta_{1|1}}{c^{\top}_{1|2}\Delta_{1|2}-c^{\top}_{2|2}\Delta_{2|2}},
\end{equation}
differing from \eqref{eq:rescaled_lambda} with mode-dependent costs.
We compute the optimal controllers in each mode via value iteration, determine $\lambda$ via \eqref{eq:lambda_mode_dep} and then approximate thresholds in Theorem \ref{thm:optimal_thresh} by discretizing $p_t$ values to 100 uniformly spaced points in $[0,1]$, and iterating $\Bcal$. 

\begin{table}[ht]
    \centering
    \caption{Performance comparison for random MDP environment of the discounted aggregate costs with mode-observed and change detection-based controller switching, averaged over 6000 runs over horizon length $H=2/\rho$.}
{
\begin{tabular}{c c c}
        \toprule
         $\rho$&$J_{\MO}$&$J_{\CD}$  \\ \midrule
         0.0100 &67.68 &68.04 \\ 
 0.0078 &85.03 &85.33\\
 0.0060 &105.81 &106.46 \\
 0.0046 &130.87 &131.34 \\
 0.0036 &159.97 &160.59 \\
 0.0028 &192.82 &193.44 \\
         \bottomrule
    \end{tabular}
}
    \label{tab:MDP_stats}
\end{table}

We recorded the discounted costs over horizons of length 1000, averaged over 4000 episodes. Across all cost and inventory parameters tested, the difference between mean costs $J_{\MO}$ and $J_{\CD}$ is less than 3.4\% of $J_{\MO}$, and in cases where $d\in\{100,300\}$, the difference is less than 0.8\% of $J_{\CD}$. Again, the difference between mean costs are significant ($p$-values $<0.002$). Thus, the change detection based switching performs well with respect to the mode observation-based controller. Note that in this simulation we do make use of cost observations in the change detection procedure, although the cost distribution changes with the mode. While incorporation of stage cost observations may improve the change detection performance, we leave such an extension to future work.

\begin{table}[ht]
    \centering
    \caption{Mean performance in the inventory control problem for the mode observing and change detection-based switching controllers with $\gamma=0.999$ and 4000 episodes.}
{
\begin{tabular}{c c c c c}
        \toprule
         $N$ & $d$ & $\lambda$ & $J_{\MO}$& $J_{\CD}$ \\ \midrule
         10 & 100 &19.39&17791&17864\\ 
 10 & 200 &8.06 &17832&18302\\
 10 & 300 &7.10&18280&18434\\
 \midrule
          15 & 100 &15.49&25941&26131\\
 15 & 200 &6.97&25944&26837\\
 15 & 300 &5.33&26404&26626\\
         \bottomrule
    \end{tabular}
}
    \label{tab:inv_stats}
\end{table}

\section{Conclusions}
\label{sec:conc}
In this work, we examined the question of detecting a change in the transition kernel of an environment modeled as a finite state/action MDP in discrete time. We used an array of approximations to formulate a quickest change detection problem to minimize the approximate regret against a mode-observing controller. In our main result, we demonstrate that a threshold-type controller switching is optimal, where the thresholds are state-dependent.

There are a number of interesting directions for future research. First, we aim to consider the setting that allows switching among multiple modes, possibly multiple times. Second, we want to extend our results to continuous state/action MDPs via kernel methods, and consider minimax formulations that relax the geometric prior on the change point. Third, we want to analyze change detection-based reinforcement learning in non-stationary environments that do not assume knowledge of system model in various modes.

\bibliographystyle{alpha}
\bibliography{Bibliography}

\begin{thebibliography}{VBP93}

\bibitem[Ber12]{bertsekas2012dynamic}
Dimitri Bertsekas.
\newblock {\em Dynamic programming and optimal control: Volume I}, volume~1.
\newblock Athena scientific, 2012.

\bibitem[BLH17]{banerjee2017quickest}
Taposh Banerjee, Miao Liu, and Jonathan~P How.
\newblock Quickest change detection approach to optimal control in markov
  decision processes with model changes.
\newblock In {\em 2017 American control conference (ACC)}, pages 399--405.
  IEEE, 2017.

\bibitem[Lor71]{lorden1971procedures}
Gary Lorden.
\newblock Procedures for reacting to a change in distribution.
\newblock {\em The annals of mathematical statistics}, pages 1897--1908, 1971.

\bibitem[LP17]{levin2017markov}
David~A Levin and Yuval Peres.
\newblock {\em Markov chains and mixing times}, volume 107.
\newblock American Mathematical Soc., 2017.

\bibitem[Mon82]{monahan1982state}
George~E Monahan.
\newblock State of the art—a survey of partially observable markov decision
  processes: theory, models, and algorithms.
\newblock {\em Management science}, 28(1):1--16, 1982.

\bibitem[Pol85]{pollak1985optimal}
Moshe Pollak.
\newblock Optimal detection of a change in distribution.
\newblock {\em The Annals of Statistics}, pages 206--227, 1985.

\bibitem[PT12]{polunchenko2012state}
Aleksey~S Polunchenko and Alexander~G Tartakovsky.
\newblock State-of-the-art in sequential change-point detection.
\newblock {\em Methodology and computing in applied probability},
  14(3):649--684, 2012.

\bibitem[Put14]{puterman2014markov}
Martin~L Puterman.
\newblock {\em Markov decision processes: discrete stochastic dynamic
  programming}.
\newblock John Wiley \& Sons, 2014.

\bibitem[Rit90]{ritov1990decision}
Yaacov Ritov.
\newblock Decision theoretic optimality of the cusum procedure.
\newblock {\em The Annals of Statistics}, pages 1464--1469, 1990.

\bibitem[She25]{shewhart1925application}
Walter~A Shewhart.
\newblock The application of statistics as an aid in maintaining quality of a
  manufactured product.
\newblock {\em Journal of the American Statistical Association},
  20(152):546--548, 1925.

\bibitem[Shi07]{shiryaev2007optimal}
Albert~N Shiryaev.
\newblock {\em Optimal stopping rules}, volume~8.
\newblock Springer Science \& Business Media, 2007.

\bibitem[VB14]{veeravalli2014quickest}
Venugopal~V Veeravalli and Taposh Banerjee.
\newblock Quickest change detection.
\newblock In {\em Academic press library in signal processing}, volume~3, pages
  209--255. Elsevier, 2014.

\bibitem[VBP93]{veeravalli1993decentralized}
Venugopal~V Veeravalli, Tamer Basar, and H~Vincent Poor.
\newblock Decentralized sequential detection with a fusion center performing
  the sequential test.
\newblock {\em IEEE Transactions on Information Theory}, 39(2):433--442, 1993.

\end{thebibliography}

\section{Appendices}
\subsection{Proof of Proposition \ref{prop:expected_regret_to_go}}
\label{app:expected_regret}
Conditioning the expected false alarm regret-to-go on $X_0^t$,
\begin{align}
\label{eq:cond_exp_FA_cost}
    {
    \small
    \begin{aligned}
        &\E\left[r^{\FA}_t|X_0^t\right]\\
         &=\E\left[\mathds{1}_{t<\Gamma}\Bigg( \sum_{n=t}^{\Gamma-1}\gamma^{n-t}\left[c(X_n,\pi_2(X_n)) - c(X'_n,\pi_1(X'_n))\right] \right. 
         \\
         & \qquad \qquad \qquad\left.
         + \sum_{n=\Gamma}^{\infty}\gamma^{n-t}\left[c(X_n,\pi_2(X_n))-c(X'_n,\pi_2(X'_n))\right]\Bigg)|X^t_0\right]
        \\
        &=\E\left[\mathds{1}_{t<\Gamma}\E\left[\Bigg( \sum_{n=t}^{\Gamma-1}\gamma^{n-t}\left[c(X_n,\pi_2(X_n)) - c(X'_n,\pi_1(X'_n))\right] \right. \right.
        \\
        & \qquad\qquad \qquad \left. \left. + \sum_{n=\Gamma}^{\infty}\gamma^{n-t}\left[c(X_n,\pi_2(X_n))-c(X'_n,\pi_2(X'_n))\right]\Bigg)|X^t_0, \Gamma\right] | X_0^t\right]
        \\
        &=\E\left[\mathds{1}_{t<\Gamma}\left(J^{\Gamma-t}_{2|1}(\delta_{X_t}) - J^{\Gamma-t}_{1|1}(\delta_{X_t}) + \gamma^{\Gamma-t}\left(J^{\infty}_{2|2}(P^{\Gamma-t}_{2|1}(\delta_{X_t})) - J^{\infty}_{2|2}(P^{\Gamma-t}_{1|1}(\delta_{X_t}))\right)\right)|X^t_0\right],
    \end{aligned}
    }
\end{align}
using the law of total expectation.
Proceeding similarly with the expected delay regret-to-go,
\begin{equation}\label{eq:cond_exp_D_cost}
    {
    \begin{split}
       \E\left[r^{\D}_t|X^t_0\right] &=\E\Bigg[\mathds{1}_{\Gamma< t} \sum_{n=t}^{\infty}\gamma^{n-t}\left[c(X_n,\pi_2(X_n))-c(X'_n,\pi_2(X'_n))\right]|X^t_0\Bigg]\\
        &= \E\Bigg[\mathds{1}_{\Gamma< t}\E\Bigg[\Bigg( \sum_{n=t}^{\infty}\gamma^{n-t}[c(X_n,\pi_2(X_n))-c(X'_n,\pi_2(X'_n))]\Bigg)|X^t_0,\Gamma\Bigg]|X^t_0\Bigg]\\
        &= \E\left[\mathds{1}_{\Gamma<t}\left(J^{\infty}_{2|2}(\delta_{X_t}) - J^{\infty}_{2|2}(P^{t-\Gamma}_{2|2}(\delta_{X_{\Gamma}}))\right)|X^t_0\right].
    \end{split}
    }
\end{equation}
 \subsection{Proof of Lemma \ref{lem:mixing_error}}\label{app:fast_mixing_regret}
 For any distribution $\tilde{\mu}$ over the initial state $X_0$, in mode $i$ under policy $j$, the (discounted) $T$-step cost to go is given by
\begin{align}
\begin{aligned}
     J^{T}_{i|j}(\tilde{\mu}) &= \E_{X_0 \sim \tilde{\mu}}\left[\sum_{t=0}^{T-1}\gamma^tc(X_t,\pi_i(X_t))\right]
     = \sum_{t=0}^{T-1}\gamma^t\E_{X_0 \sim \tilde{\mu}}\left[c(X_t,\pi_i(X_t))\right]
    = \sum_{t=0}^{T-1}\gamma^tc^{\top}_iP^{t}_{i|j}(\tilde{\mu}).
\end{aligned}
\end{align}
Therefore, we have
\begin{align}
{\small
\begin{aligned}
    |J^{T}_{i|j}(\tilde{\mu}) - J^{T}_{i|j}(\Delta_{i|j})| 
    &\leq \|c_i\|_{\infty}\sum_{t=0}^{T-1}\gamma^t\|P^{t}_{i|j}(\tilde{\mu}) - \Delta_{i|j}\|_1\\
    &= 2\|c_i\|_{\infty}\sum_{t=0}^{T-1}\gamma^t\|P^{t}_{i|j}(\tilde{\mu}) - \Delta_{i|j}\|_{\TV}
    \\
    &\leq 2 B_{i|j} \|c_i\|_{\infty}\frac{1-(\gamma\beta_{i|j})^{T}}{1-\gamma\beta_{i|j}}.
\end{aligned}
}
\end{align}
\subsection{Proof of Theorem \ref{thm:optimal_thresh}}\label{app:optimal_thresh}
Consider the minimization of $\tilde{R}(\tau)$ over a finite horizon of length $T$. Precisely, define the optimal cost of the following $T$-period optimal control problem,
\begin{align}
\begin{aligned}
            \hspace{-0.5in} \tilde{v}^T_\star(p,x) := \min_{\tau\in\Scal^T}\E\left[\sum_{t=0}^{\tau}\tilde{g}_t(D_t,\Theta_t)|p_0=p,X_0=x\right].
\end{aligned} 
\end{align}
where $\Scal^T := \{\tau\in\Scal:P\{\tau\leq T-1\}=1\}$. Here, $\tilde{v}^T_\star(p,x)$ describes the optimal cost, assuming a starting state $x$ with an initial belief $p$ that $\theta_0=2$ for all $t$. The sequence $\{p_t\}$ with $p_0=p$ evolves according to the recursion $\Phi$ defined in \eqref{eq:p_recursion}.
In this $T$-period problem, if a change has not been declared at any time $t<T$, then it is declared at time $T$. As $\Scal^T\subseteq \Scal^{T+k}$ for $k>0$, $\tilde{v}^\star_T$ is decreasing in $T$. The optimal costs satisfy
\begin{subequations}
\begin{align}
    \tilde{v}^0_\star(p,x) &= \lambda(1-p),
    \\
    \tilde{v}^{T+1}_\star(p,x) &= \min\{\lambda(1-p), p + \E[\tilde{v}^T_\star(\Phi(X^+,x,p),X^+)|p,x,\pi_1(x)]\}.
    \label{eq:val_func_forward}
\end{align}
\end{subequations}
for $0<t\leq T-1$.
The relation in \eqref{eq:val_func_forward} holds as the change may be declared at time $t=0$, or at a later time $t>0$. When selecting the latter option, due to the memoryless property of the geometric prior on the change point, the optimal cost following the transition due to state-action pair $(x,\pi_1(x)$) is precisely $\tilde{v}^T_\star(p^+,X^+)$. 

Define the optimal cost over the infinite horizon as
\begin{equation}
    \tilde{v}_\star(p,x) := \inf_{\tau\in\Scal}\E\left[\sum_{t=0}^{\tau}\tilde{g}_t(D_t,\Theta_t)|p_0=p,X_0=x\right]. 
\end{equation}
We now show that 
\begin{align}
    \tilde{v}_\star = \lim_{T\to\infty}\tilde{v}^T_\star.
    \label{eq:v.lim}
\end{align} 
The limit exists, owing to the decreasing and nonnegative nature of $\tilde{v}^T_\star$.
Since $\Scal^T \subset \Scal$, we have $\tilde{v}\leq\lim_{T\to\infty}\tilde{v}^T_\star$.  To show the reverse inequality, fix $(p,x)$ and  $\varepsilon > 0$. Then, there exists an almost surely finite stopping time $\tau_{\varepsilon}\in \Scal$ such that 
\begin{equation}
    \tilde{v}_\star(p,x) + \varepsilon\geq \E\left[\sum_{t=0}^{\tau_{\varepsilon}}\tilde{g}_t(D_t,\Theta_t)|p_0=p,X_0=x\right].
\end{equation}
We then have that 
\begin{equation}
\begin{split}
    \E\left[\sum_{t=0}^{\tau_{\varepsilon}}\tilde{g}_t(D_t,\Theta_t)|p_0=p,X_0=x\right] &\overset{(a)}{=}\E\left[\lim_{T\to\infty}\sum_{t=0}^{\min\{\tau_{\varepsilon},T\}}\tilde{g}_t(D_t,\Theta_t)|p_0=p,X_0=x\right]\\
    &\overset{(b)}{=}\lim_{T\to\infty}\E\left[\sum_{t=0}^{\min\{\tau_{\varepsilon},T\}}\tilde{g}_t(D_t,\Theta_t)|p_0=p,X_0=x\right]\\
    &\geq \lim_{T\to\infty}\tilde{v}^T_\star(p,x),
\end{split}
\end{equation}
where (a) uses the monotone convergence theorem and $P\{\tau_{\epsilon}<\infty\} =1$, while (b) uses the dominated convergence theorem. 
Since $\varepsilon > 0$ is arbitrary, $\lim_{T\to\infty}\tilde{v}^{T}_\star(p,x)\leq \tilde{v}_\star(p,x)$, which completes the proof of  $\tilde{v}_\star = \lim_{T\to\infty}\tilde{v}^T_\star$.

For the $T$-period problem, dynamic programming implies the existence of optimal cost-to-go functions $\tilde{V}_0^T, \ldots \tilde{V}_T^T$ that satisfies 
$\tilde{v}^T_\star(p,x) = \tilde{V}^T_0(p,x)$, where
\begin{subequations}
\begin{align}
    \tilde{V}^T_T(p,x) &= \lambda(1-p),
    \label{eq:DP_suff_stage_cost_T}\\
    \tilde{V}^T_t(p,x) &= \min\left\{\lambda(1-p),p + A^T_t(p,x)\right\}
         \label{eq:DP_suff_stage_cost}
\end{align}
\end{subequations}
for $t=0, \ldots, T-1$, where
\begin{align}
    A^T_t(p,x) := \E\left[\tilde{V}^T_{t+1}(\Phi(X^+, x, p),X^+)|p,x,\pi_1(x)\right].
    \label{eq:def.AT}
\end{align}
Here, $X^+$ denotes the random next state from $x$ under action $\pi_1(x)$, given belief $p$ on the change. With this notation, we prove the result through the following sequence of steps.
\begin{enumerate}[label=(\alph*)]
    \item We first show that $\tilde{V}^T_t(p,x)$ for $t=0, \ldots, T$ and $A^T_t(p,x)$ for $t=0, \ldots, T-1$ are nonnegative, concave functions of $p\in[0,1]$ for every $x\in \Xset$. 
    
    \item Next, we utilize the concavity of $\tilde{V}^T$ and $A^T$ to deduce the threshold structure of an optimal stopping rule for the $T$-period problem.
    
    \item Then, we establish that $
    \lim_{T \to \infty}\tilde{V}_t^T$ exists and the limit satisfies $\tilde{V} = \tilde{v}_\star$.
    
    \item If $\Bcal G = G$ for any $G \in \Ccal$, then we show that $G=\tilde{V}$.
    
    \item We conclude the proof by showing $\tilde{V} = \lim_{k\to \infty}\Bcal^k \psi$.
\end{enumerate}

\subsubsection*{$\bullet$ Step (a). Showing the nonnegative concave nature of $\tilde{V}^T(\cdot, x)$ and $A^T(\cdot, x)$}

We prove the claim using backward induction. Notice that $\tilde{V}^T_T(p,x)$ does not depend on $x$, and is a nonnegative, affine function of $p\in[0,1]$. Hence, the claim regarding $\tilde{V}$'s holds for $\tilde{V}^T_T(p,x)$. Then, we have
\begin{equation}
\label{eq:A_T_T_1}
    \begin{split}
        A^T_{T-1}(p,x) 
        =\lambda - \lambda\mathbb{E}\left[\Phi(X^+,x,p)|p,x,\pi_1(x)\right].
        \end{split}
\end{equation}
To simplify the above expression, notice that 
\begin{equation}\label{eq:phi_exp}
    \begin{split}
        &\mathbb{E}[\Phi(X^+,x,p)|p,x,\pi_1(x)] = \sum_{x^+\in\Xset}\Phi(x^+,x,p)P(x^+|p,x,\pi_1(x)),
    \end{split}
\end{equation}
where $P$ encodes the probability of $X^+$, given $p, x, \pi_1(x)$. Since $p$ equals the probability of the mode having switched already by time $t$, we infer
\begin{align}
\begin{aligned}
    P(x^+|p, x, \pi_1(x))
    &= (1-p)\left[(1-\rho) P_1(x^+|x, \pi_1(x)) + \rho P_2(x^+|x, \pi_1(x)) \right] 
    + p P_2(x^+|x, \pi_1(x))
    \\
    &= \left[1 - \left( p + \rho (1-p) \right) \right] P_1(x^+|x,\pi_1(x)) + \left( p + \rho (1-p) \right) P_1(x^+|x,\pi_1(x))
    \\
    &= (1-\bar{p}) P_1(x^+|x,\pi_1(x))
    + \bar{p} P_2(x^+|x,\pi_1(x)).
\end{aligned}
\end{align}
Then, \eqref{eq:L.def} and \eqref{eq:p_recursion} in the above relation yield
\begin{equation}
\hspace{-0.1in}
    \begin{split}
        \E[\Phi(X^+,x,p)|p,x,\pi_1(x)]
        &=\sum_{x^+\in\Xset}\frac{\bar{p} L(x^+,x)}{\bar{p} L(x^+,x) + (1-\bar{p})}P(x^+|p,x,\pi_1(x))
        \\
        &= \sum_{x^+\in\Xset}\bar{p} P_1(x^+|x,\pi_1(x))
        \\
        &=\bar{p}.
    \end{split}
    \label{eq:exp.phi}
\end{equation}
Further, using \eqref{eq:exp.phi} in \eqref{eq:A_T_T_1}, we get
\begin{equation}
    A^T_{T-1}(p,x)
    =\lambda(1-\rho)(1-p),
\end{equation}
which is a nonnegative and affine (concave) in $p$. This step completes the base case for the induction argument. 

Next, assume that $\tilde{V}^T_{t+1}(p,x)$ is nonnegative and concave in $p$ for all $x \in \Xset$. We show the same for $A^T_t(p, x)$ and $\tilde{V}^T_t(p, x)$. To that end, concavity of $\tilde{V}^T_{t+1}(p,x)$ implies the existence of a collection of affine functions 
such that
\begin{equation}
    \tilde{V}^T_{t+1}(p,x) = \inf_{z\in\mathcal{Z}(x)}\,\{a_z p + b_z\}
\end{equation}
for scalars $b_z, a_z$ for each $z \in \Zcal(x)$. 
Then, \eqref{eq:def.AT} gives
\begin{align}
\begin{aligned}
        A^T_t(p,x)
        &= \mathbb{E}\left[\tilde{V}^T_{t+1}(\Phi(X^+,x,p),x)|p,x,\pi_1(x)\right]\\
        &= \inf_{z\in\mathcal{Z}(x)}\mathbb{E}\left[a_z\Phi(X^+,x,p)+b_z|p,x,\pi_1(x)\right]\\
        &= \inf_{z\in\mathcal{Z}(x)}a_z\mathbb{E}\left[\Phi(X^+,x,p)|p,x,\pi_1(x)\right]+b_z.
\end{aligned}\label{eq:A_inf}
\end{align}
Leveraging \eqref{eq:phi_exp} in the above relation, we get
\begin{equation}
    A^T_t(p,x) = \inf_{z\in\mathcal{Z}(x)}\,\{a_z(p + \rho(1-p))+b_z\},
\end{equation}
proving that $A^T_t(\cdot,x)$ is concave. Nonnegativity of $\tilde{V}^T_{t+1}(p,x)$ yields the same for $A^T_t(p, x)$. Further,  from \eqref{eq:DP_suff_stage_cost}, $\tilde{V}^T_t(p,x)$ equals the pointwise minimum of a pair of concave and nonnegative functions of $p$, and therefore, is itself concave and nonnegative in $p\in[0,1]$. This completes the proof of step (a). 

\subsubsection*{$\bullet$ Step (b). Proving the threshold structure of the stopping rule in the $T$-period problem}
From \eqref{eq:DP_suff_stage_cost}, for $t=0,\ldots,T-1 $ $\tilde{V}^T_t(p, x)$ is the minimum of $\lambda(1-p)$ and $p + A^T_t(p, x)$. The first among these functions decreases monotonically from $\lambda$ to zero as $p$ sweeps from zero to unity. The second function is nonnegative and concave, owing to the same nature of $A^T_t(p, x)$. In the rest of this step, we prove that  
\begin{align}
    A^T_t(1, x) = 0
    \label{eq:ATt.1}
\end{align} 
for all $x\in\Xset$, and $t=0,\ldots, T-1$, which in turn yields that $p + A^T_t(p, x)$ can intersect $\lambda(1-p)$ at most once. If it does not intersect (equivalently $\lambda < A^T_t(0, x)$), then it is always optimal to stop at time $t$. Otherwise, the unique crossing point ${p}^\circ_t(x)$ satisfies
\begin{equation}
    \lambda(1-p^{\circ}_t(x))= {p}^\circ_t(x) + A^T_t({p}^\circ_t(x),x),
\end{equation}
implying that it is optimal to stop at time $t$, when $p$ at time $t$ exceeds ${p}^\circ_t(x)$. 

We prove \eqref{eq:ATt.1} by backward induction. From \eqref{eq:DP_suff_stage_cost_T}, we have $\tilde{V}^T_T(1,x) = 0$. Notice that \begin{align}
    p = 1 \implies \bar{p} = 1 \implies  \Phi(x^+,x, 1) = 1,
\end{align}
which in turn yields
    \begin{equation}
        \begin{split}
            A^T_{T-1}(1,x) 
            &= \E\left[\tilde{V}^T_T(\Phi(X^+,x, 1),X^+)|1,x,\pi_1(x)\right]
            \\
            &= \E\left[\tilde{V}^T_T(1,X^+)|1,x,\pi_1(x)\right] 
            \\
            &= 0,
        \end{split}
        \label{eq:ATt.1.2}
    \end{equation}
    proving the base case. 
    Per induction hypothesis, assume that $A^T_{t+1}(1, x) = 0$ for all $x\in\Xset$. Then, \eqref{eq:DP_suff_stage_cost} implies
    \begin{equation}
        \begin{split}
       \tilde{V}^T_{t+1}(1,x)=\min\{0, 1 + A^T_{t+1}(1,x)\} = 0.
        \end{split}
    \end{equation}
    Proceeding along the lines of \eqref{eq:ATt.1.2} completes the induction argument. This concludes the proof of the threshold structure of an optimal switching policy for the $T$-period problem.

\subsubsection*{$\bullet$ Step (c). Showing that $\tilde{v}_\star(p,x) = \lim_{T\to \infty}\tilde{V}^T_t(p,x)$}
For any stopping rule, extending the horizon length can only decrease the total cost, and hence, 
\begin{equation}\label{eq:V_monotonic}
    \tilde{V}^{T+1}_t(p,x)\leq \tilde{V}^T_t(p,x)
\end{equation}
for all $t=0,\ldots,T$, $p\in[0,1]$ and $x\in \Xset$.
Since $\tilde{V}^T_t$ is bounded below by zero for any finite $t$, the monotonically decreasing sequence converges as $T\to\infty$. Call this limit $\tilde{V}_t^\infty(p,x)$. Next, we show that this limit is independent of $t$.
The memoryless property of the geometric prior yields
\begin{equation}
    \tilde{V}^{T+1}_{t+1}(p,x) = \tilde{V}^T_t(p,x) 
    \implies    \tilde{V}^{\infty}_{t+1}(p,x) =\tilde{V}^{\infty}_{t}(p,x)
\end{equation}
for all $p\in[0,1]$ and $x \in \Xset$. The second equality follows from driving $T\to\infty$. Call this quantity $\tilde{V}(p,x)$. Next, we prove that $\Bcal \tilde{V} = \tilde{V}$.

From \eqref{eq:DP_suff_stage_cost},  $ \tilde{V}^T_t(p, x) \in [0, \lambda]$. Appealing to the dominated convergence theorem, \eqref{eq:def.AT} then allows us to infer
\begin{align}
    \hspace{-0.1in}\lim_{T\to\infty}A^{T}_t(p,x)= \E[\tilde{V}^{\infty}(\Phi(X^+,x,p),X^+)|p,x,\pi_1(x)]
\end{align} for all $p$ and $x$. Akin to the case with $\tilde{V}$, this limit does not depend on $t$; call it $A(p,x)$. Then, $\Bcal \tilde{V} = \tilde{V}$ follows from taking $T\to \infty$ on \eqref{eq:DP_suff_stage_cost}.

Now, \eqref{eq:v.lim}, the definition of $\tilde{V}^T_0$, and the first part of step (c) allows us to conclude
\begin{align}\label{eq:v_lim}
    \tilde{v}_\star(p,x) = \lim_{T\to\infty} \tilde{v}^T_\star(p,x) = \lim_{T\to\infty} \tilde{V}^T_0(p,x) = \tilde{V}(p,x),
\end{align}
implying that $\tilde{V}(p,x)$ characterizes the optimal cost of the infinite-horizon problem.

Nonnegativity and concavity of $\tilde{V}$ and $A$ follow from the same properties of $\tilde{V}^T$ and $A^T$, deduced in step (a). From step (b), we obtain $A(1,x) = 0$ for all $x \in \Xset$. Therefore, $\tilde{V}(1,x) = 0$ as well. Arguing along the lines of step (b), we infer that an optimal switching rule for the infinite horizon problem is threshold-type, where the threshold $p^\circ(x)$ satisfies
\begin{equation}
    \lambda(1-p^\circ(x))= {p}^\circ(x) + A({p}^\circ(x),x).
\end{equation}

Finally, we show that the switching time $\tau$ implied by $\tilde{v}_\star = \tilde{V}$ almost surely finite as follows.
\begin{subequations}\label{eq:boundedness_arg}
\begin{align}
    \E[\tau|p_0=p,X_0=x] &\leq  \E[|\tau-\Gamma||p_0=p,X_0=x] +\E[\Gamma|p_0=p,X_0=x] 
    \label{eq:tau_bound}
    \\
    &\leq \frac{1}{\min\{1,\lambda\}} \big(\E[(\tau-\Gamma)_+|p_0=p,X_0=x]
    \notag
    \\
    & \qquad \qquad + \lambda\E[(\Gamma-\tau)_+|p_0=p,X_0=x]\big) +\E[\Gamma|p_0=p,X_0=x]
    \label{eq:tau_bound2}
    \\
    & = \frac{1}{\min\{1,\lambda\}}\tilde{v}_\star(p, x) + \E[\Gamma|p_0=p,X_0=x]
    \label{eq:tau_bound3}
    \\
    & < \infty.
\end{align}
\end{subequations}
Therefore, $P\{\tau <\infty | p_0=p, X_0=x \} = 1$, proving the claim. Here, \eqref{eq:tau_bound} follows from triangle inequality, \eqref{eq:tau_bound2} from elementary calculations, and \eqref{eq:tau_bound3}
from the finiteness of $\tilde{v}_\star$ from \eqref{eq:v.lim}.

\subsubsection*{$\bullet$ Step (d). Showing that $\Bcal$ has a unique fixed point in $\Ccal$}
Suppose $\Bcal G=G$ for some $G\in\Ccal$. Associate with $G$,
the following stopping rule
\begin{equation}
\begin{split}
    \tau_G = \min\{t\geq 0\,:\,\lambda(1-p_t)\leq p_t+\E[G(p_{t+1},X_{t+1})|p_t,x_t,\pi_1(x_t)]\}.
\end{split}
\end{equation}
Then, the definition of $\tau_G$ and $\Bcal G = G$ give
\begin{equation}
\begin{split}
    G(p,x) &= \E[(\tau_G-\Gamma))_+|p_0=p,X_0=x]+ \lambda P(\Gamma>\tau_G|p_0=p,X_0=x).
\end{split}
\end{equation}
Given that $G \in \Ccal$ is bounded, the argument in \eqref{eq:boundedness_arg} allows us to infer $P\{\tau_G<\infty|p_0=p,X_0=x\}=1$, implying that $\tau_G \in \Scal$. Then, $\tilde{v}_\star = \tilde{V}$ yields 
\begin{align}
    G(p,x)\geq \tilde{V}(p,x).
\end{align}
Next, we establish the reverse of the above inequality to conclude the proof of this step. It suffices to prove that $G(p,x) \leq \tilde{V}^T_{0}(p,x)$ for all $x\in\Xset$ and $p\in[0,1]$, which we show by backward induction. We begin by noticing that $G \in \Ccal$ implies
\begin{equation}
    G(p,x)\leq \lambda(1-p) = \tilde{V}^T_T(p,x) 
\end{equation}
for all $p\in[0,1]$, $x\in\Xset$, proving the base case. Assume the induction hypothesis $G(p,x) \leq \tilde{V}^T_{t+1}(p,x)$. Then, we have
\begin{align}
    \hspace{-.2in}\begin{aligned}
        \tilde{V}^T_{t}(p,x) 
        &= \min\{\lambda(1-p),1 + \E[\tilde{V}^T_{t+1}(\Phi(X^+, x, p), X^+)|p,x,\pi_1(x)]\}
        \\
        &\geq \min\{\lambda(1-p), 1 + \E[G(\Phi(X^+, x, p), X^+)|p,x,\pi_1(x)]\}
        \\
        &=G(p,x),
    \end{aligned}
\end{align}
where the last line uses $\Bcal G = G$.
This concludes the induction hypothesis and the proof of this step.

\subsubsection*{$\bullet$ Step (e). Proving  $\lim_{k \to \infty}\Bcal^k \psi = \tilde{V}$}
We utilize induction to establish that
\begin{align}
    0 \leq [\Bcal^{k+1}\psi](p,x) \leq [\Bcal^k \psi](p,x)
    \label{eq:Bk}
\end{align} 
for all $p\in [0,1]$, $x \in \Xset$, and $k\geq 0$. Then, the monotonically decreasing but bounded sequences $[\Bcal^k \psi](p,x)$ admits a limit. The rest follows from the observation that this limit must be a fixed point of $\Bcal$, for which $\tilde{V}$ is the unique candidate.

We prove \eqref{eq:Bk} via induction. Notice that 
\begin{align}
\hspace{-0.16in}\begin{aligned}
    [\Bcal\psi](p,x)
    = \min\{\psi(p,x), p + \E[\psi(\Phi(X^+,x,p),X^+)|p,x,\pi_1(x)]\}
    \leq \psi(p,x).
\end{aligned}
\end{align}
Here, $X^+$ denotes the random next state from $x$ under action $\pi_1(x)$.
Nonnegativity of $[\Bcal\psi](p,x)$ follows from the nonnegativity of both functions, whose minimum it equates to. Hence, \eqref{eq:Bk} holds for $k=0$. Next, assume that \eqref{eq:Bk} holds for $k$, and then we have
\begin{align}
\hspace{-0.1in}
    \begin{aligned}
        [\Bcal^{k+2}\psi](p,x)
        &=\min\{\lambda(1-p), p+\E\left[[\Bcal^{k+1}\psi](\Phi(X^+,x,p),X^+)|p,x,\pi_1(x)\right]\}
        \\
        &\overset{(a)}{\leq}\min\{\lambda(1-p), p+\E\left[[\Bcal^{k}\psi](\Phi(X^+,x,p),X^+)|p,x,\pi_1(x)\right]\}
        \\
        &= [\Bcal^{k+1}\psi](p,x),
    \end{aligned}
\end{align}
where (a) utilizes the induction hypothesis. This completes the proof of \eqref{eq:Bk} and Theorem \ref{thm:optimal_thresh}.

\end{document}